\titleformat*{\section}{\bf\Large} 
\titleformat*{\subsection}{\bf\large} 
\newcommand{\be}{\begin{equation}}
\newcommand{\ee}{\end{equation}}
\newcommand{\vs}{\vspace{0.2cm}}
\newtheorem{Theorem}{Theorem}[section]
\newtheorem{Lemma}[Theorem]{Lemma}
\newtheorem{Corollary}[Theorem]{Corollary}
\newcommand{\dist}{{\rm dist}}
\newcommand{\diss}{\displaystyle}
\newcommand{\length}{{\rm length}}
\title{On static solutions of the Einstein - Scalar Field equations} 
\author{Mart\'in Reiris\footnote{Email: mareithu@gmail.com. This work was finished while the author was a postdoc at the Max Planck Institute for Gravitational Physics. Golm, Germany}}
\date{}
\begin{document}
\maketitle
\vspace{-1cm}
\begin{abstract}
{In this article we study self-gravitating static solutions of the Einstein-ScalarField system in arbitrary dimensions. We discuss the existence and the non-existence of geodesically complete solutions depending on the form of the scalar field potential $V(\phi)$, and provide full global geometric estimates when the solutions exist. Our main results are summarised as follows. For the Klein-Gordon field, namely when $V(\phi)=m^{2}|\phi|^{2}$, it is proved that geodesically complete solutions have Ricci-flat spatial metric, have constant lapse and are vacuum, (that is $\phi$  is constant and equal to zero if $m\neq 0$). In particular, when the spatial dimension is three, the only such solutions are either Minkowski or a quotient thereof (no nontrivial solutions exist). When $V(\phi)=m^{2}|\phi|^{2}+2\Lambda$, that is, when a vacuum energy or a cosmological constant is included, it is proved that no geodesically complete solution exists when $\Lambda>0$, whereas when $\Lambda<0$ it is proved that no non-vacuum geodesically complete solution exists unless $m^{2}<-2\Lambda/(n-1)$, ($n$ is the spatial dimension) and the spatial manifold is non-compact. The proofs are based on techniques in comparison geometry {\it \'a la Backry-Emery.}} 
\end{abstract}

\maketitle
\section{Introduction}

A classical result in General Relativity due to Lichnerowicz \cite{Lichnerowicz} (with previous work by Einstein and Einstein-Pauli, see \cite{10.2307/1968759}) asserts that there are no nontrivial asymptotically flat stationary solitons\footnote{By soliton we understand a `particlelike' regular and localised solution.} of the vacuum Einstein equations. That is, the only such solution is the Minkowski spacetime. This result is generalisable to include matter like the electromagnetic field or the Klein-Gordon field, but not to any type of non-exotic matter. Indeed Bartnik and McKennon \cite{Bartnik:1988am}, (and later rigorously Smoller-Wasserman-Yau-McLeod \cite{ref1}), found a remarkable soliton for the Einstein-YangMills system with gauge group SU(2), despite of the fact that there are no nontrivial solitons for the SU(2) Yang-Mills theory alone. The discovery of this soliton opened an interesting new window of research continuing until today. 

On the other hand, a fundamental extension of Lichnerowicz's theorem in vacuum due to Anderson \cite{MR1806984} asserts that the only geodesically complete solution of the strictly stationary Einstein equations is Minkowski or a quotient thereof. Thus, for strictly stationary solutions, no material sources implies no gravity and this is true in any possible geodesically complete scenario. Anderson's theorem uses fundamentally a well known conformal transformation that presents the stationary equations as a harmonic map into hyperbolic two-space (see \cite{MR2003646}). Thinking in possible extensions, the drawback of this presentation is that it is seldom possible when material fields are considered. In addition Anderson's proof relies heavily on special properties of the Cheeger-Gromov theory of convergence and collapse of Riemannian manifolds in dimension three that do not hold in higher dimensions. 

Recently however, Cortier and Minerve \cite{Cortier:2014voa} were able to reprove Anderson's result (under extra assumptions) using just standard techniques in comparison geometry. Yet their result relies again on the harmonic-map representation earlier mentioned. On the other hand, Case \cite{MR2741248} also motivated by Anderson's theorem, was able to apply techniques in comparison geometry \'a la Backry-Emery to prove closely related rigidity results for quasi-Einstein metrics bearing much in common with the static Einstein equations and with certain types of Einstein-ScalarField equations. Case's technique is considerably flexible and generalisable to higher dimensions. We note also that there is recent interesting work by Galloway-Woolgar \cite{Galloway:2013cea} and Woolgar-Wylie \cite{Woolgar:2015wca} on the so called Backry-Emery spacetimes familiar in some way to Case's paper.  

Motivated by these developments, in this article we import techniques in comparison geometry \'a la Backry-Emery to study in depth the existence or non-existence of {\it geodesically complete} static solutions of the Einstein-ScalarField equations for different types of the scalar field potential $V(\phi)$. We do not make any dimensional, global, or even any asymptotic assumption like asymptotic flatness, and in this sense several of the conclusions of this paper are the most general they can be.

As a main application, with a marked physical interest, we discuss thoroughly the ubiquitous Klein-Gordon field in the presence or not of a cosmological constant. Let us describe the conclusions in full detail. For the setup we refer the reader to the next section. We divide the discussion according to the type of potential $V(\phi)$. When $V(\phi)$ is just the standard Klein-Gordon potential, i.e. $V(\phi)=m^{2}|\phi|^{2}$, it is proved that geodesically complete solutions have Ricci-flat spatial metric, have constant lapse $N$, and are vacuum, that is $\phi=\phi_{0}$ with $\phi_{0}=0$ if $m\neq 0$, ($\S$ Theorem \ref{EKGLZ}). Therefore, if the spatial dimension is three, the only such solutions are either Minkowski or a quotient thereof. When $V(\phi)=m^{2}|\phi|^{2}+2\Lambda$, that is, including a vacuum energy or a cosmological constant, we prove that no geodesically complete solution exists when $\Lambda>0$, whereas when $\Lambda<0$ it is proved that no non-vacuum geodesically complete solution exists unless $m^{2}<-2\Lambda/(n-1)$ and unless the manifold is non-compact, ($\S$ Theorem \ref{NCOMP} and Theorem \ref{LPHIEST}). Moreover, in this case, we provide the pointwise estimate $|\nabla \phi|^{2}+m^{2}|\phi|^{2}\leq -68\Lambda$ for the energy density ($\S$ Theorem \ref{LPHIEST}), the pointwise estimate $|\nabla N|/N\leq 64\sqrt{-\Lambda}$ for the gradient of the lapse ($\S$ Theorem \ref{EKGLEDEST}), and, when the spatial dimension is three, we prove a general pointwise bound on the curvature in terms of $|\Lambda|$ ($\S$ Theorem \ref{RCEST}). In this last case, vacuum solutions (i.e. $\phi=0$) other than AdS were shown to exist by Anderson in \cite{OSCCEM} though it is not know if non-vacuum solutions exist. The pointwise estimates that we obtain seem to be new even in the vacuum case. These estimates could be useful in theories that study spaces asymptotic to AdS, with or without a scalar field.


One could easily guess that, by following either the line of argument used in this paper for the Klein-Gordon field or by taking other original paths, many other new applications of the techniques could be found. As an instance of that, in Section \ref{RSF} we enumerate briefly a series of conclusions (mostly `no-go' theorems) that one can easily reach for the Einstein-(real)ScalarField system, for several different types of potentials $V(\phi)$ including the (real)-Sine-Gordon and the (real)-Higgs potentials. 


We elaborate now on the method of proof. To simplify the whole presentation we opted to base our results on a main technical Lemma \ref{MAIN}, which was adapted from \cite{MR2741248}, and from which all the main theorems are directly obtained. 
%
%
The applications, which are elaborated in Section \ref{APPL}, are deduced from Lemma \ref{MAIN} in conjunction with a simple observation that is worth to mention here. The key observation is that, using the static equations and the B\"ochner-type of equation (\ref{BOC}), one can obtain expressions for the $f$-Laplacian $\Delta_{f}$, (with $f=-\ln N$, see next section), of $\psi=|\nabla \phi|^{2}$ and of $\psi=|\nabla \ln N|^{2}$ of the form
\be\label{firsteq}
\Delta_{f}\psi\geq b\psi+c\psi^{2},
\ee
with $b\leq 0$ and $c>0$. The Lemma \ref{MAIN} is then used to provide fundamental pointwise estimates for a $\psi$ satisfying (\ref{firsteq}) whenever the $f$-Ricci tensor $Ric^{1}_{f}$ (see next section) is bounded below. As it happens that in the main applications $Ric_{f}$ is bounded below, this provides the fundamental gradient estimates for $\phi$ and $\ln N$ from which all the conclusions of this paper follow. 

The organisation of the article is as follows. In Section \ref{THSE} we recall the main static equations of the Einstein-ScalarField system, together with the notation and the terminology. Subsection \ref{THSEA} explains the type of manifolds used during the paper, and that have to be read with care to avoid confusion. Section \ref{THTL} is the technical section where the main Lemma \ref{MAIN} is stated and proved. The applications are discussed in Section \ref{APPL} which is divided in three subsections: Subs. \ref{KGSUB} discusses the Klein-Gordon case, Subs. \ref{LKGSUB} discusses the Klein-Gordon case in the presence of a cosmological constant and Subs. \ref{RSF} elaborates on applications to real-Scalar Fields. 

\section{The static equations}\label{THSE}  
We give below the static equations of the Einstein-ComplexScalarField system in spacetime-dimension $n+1$, ($n\geq 2$). We use the following notation: (i) $\phi$ is the complex scalar field and $\bar{\phi}$ the complex conjugate (ii) $\phi_{R}$ is the real part of $\phi$ and $\phi_{I}$ the imaginary part (iii) $|\phi|$ is the norm of $\phi$ and $|\nabla \phi|$ is the norm of $\nabla \phi$ (i.e. $|\nabla \phi|^{2}=\langle \nabla \phi,\nabla \bar{\phi}\rangle$). The potentials that we will consider are of the form $V(\phi_{R},\phi_{I})$. We will use the shorthand $V(\phi)$. The spacetime metric is assumed to split as
${\bf g}=-N^{2}dt^{2}+g$, and the metric $g$, as well as the lapse $N>0$, live in a $n$-dimensional manifold $\Sigma$. The relevant data is thus $(\Sigma;N,g;\phi)$.

The static Einstein-(Complex)ScalarField equations are,
\begin{align}
\label{SESF1} & Ric +\nabla\nabla f-\nabla f\nabla f=\nabla \phi\circ \nabla \bar{\phi} +\frac{V(\phi)}{n-1} g,\\
\label{SESF3} & \Delta f -\langle \nabla f,\nabla f\rangle=\frac{V(\phi)}{n-1},\\
\label{SESF4} & \Delta \phi -\langle\nabla f,\nabla\phi\rangle=\frac{\partial V(\phi)}{2},
\end{align}
where $f=-\ln N$, $\nabla \phi\circ \nabla \bar{\phi}=(\nabla \phi \nabla \bar{\phi}+\nabla \bar{\phi} \nabla \phi)/2=\nabla \phi_{R}\nabla \phi_{R}+\nabla\phi_{I}\nabla \phi_{I}$, and $\partial V$ is
\be
\partial V=\frac{\partial V}{\partial \phi_{R}}+i\frac{\partial V}{\partial \phi_{I}}
\ee
These equations imply, in turn, the following expression for the scalar curvature $R$ (or energy density),
\be
\label{SESF2} R=|\nabla \phi|^{2}+V(\phi).
\ee

The system (\ref{SESF1})-(\ref{SESF3}) arises as the {\it static} Euler-Langrange equations of the $n+1$-dimensional (spacetime) action\footnote{Inserting positive constants in front of $R$, or $\nabla_{\mu}\phi \nabla^{\mu} \phi$ does not change the analysis of this article.} 
\be
\mathcal{S}({\bf g},\phi)=\int \bigg[\, R_{\bf g}-\nabla_{\mu}\phi\nabla^{\mu}\bar{\phi}-V(\phi)\bigg]\, d {\rm v}_{\bf g}
\ee
or, also, as the Euler-Lagrange equations of the $n$-dimensional (spatial) action
\be
\mathcal{S}(f,g,\phi)=\int \bigg[R-|\nabla \phi|^{2}-V(\phi)\bigg]e^{-f} d{\rm v}
\ee
A few times below we will use, following \cite{MR2577473}, the notation
\be
Ric^{1}_{f}:=Ric+\nabla\nabla f-\nabla f\nabla f
\ee
and
\be
\Delta_{f}\psi=\Delta\psi-\langle \nabla f,\nabla \psi\rangle=0
\ee 
\subsection{Manifolds}\label{THSEA}
Without any explicit specification, a `{\it manifold} $\Sigma$' is allowed to have boundary or to be boundaryless, and to be compact or non-compact. Whatever the case, $(\Sigma,g)$ is assumed metrically complete with respect to the standard metric
\be
\dist(p,q)=\inf\{\length(\gamma_{pq}): \gamma_{pq}\in \mathcal{C}_{pq}\}
\ee
where $\mathcal{C}_{pq}$ is the set of smooth curves joining $p$ to $q$. Hence, if $\Sigma$ is boundaryless then $(\Sigma,g)$ is {\it geodesically complete} by Hopf-Rinow. On the other hand if $\Sigma$ has boundary then $(\Sigma,g)$ is {\it geodesically incomplete} as geodesics can terminate at the boundary. Henceforth, when we say `{\it $(\Sigma,g)$ is geodesically complete}', we are saying implicitly that $\Sigma$ is boundaryless, no matter if $\Sigma$ is compact or not. 

These conventions have to be kept in mind to prevent confusion. For example, the de-Sitter metric
\be
g=\bigg(\frac{1}{1-\Lambda r^{2}/3}\bigg)dr^{2}+r^{2}d\Omega^{2}_{n-2},\ N=\sqrt{1-\frac{\Lambda r^{2}}{3}}
\ee
is a solution of the static Einstein equations with a positive cosmological constant, although we will show later that there is no such solution which is geodesically complete. The point here is that the de-Sitter solution is defined on a manifold with boundary (the cosmological horizon), hence geodesically incomplete.

As was detailed earlier, several of the main conclusions of this article are about {\it geodesically complete static spacetimes}. Namely, geodesics of any spacetime character have infinite parametric time. If, as we are assuming, such spacetime splits as $\mathbb{R}\times \Sigma$ with ${\bf g}=-N^{2}dt^{2}+g$ then the slice $(\Sigma; g)$ is also geodesically complete because $\Sigma$ is totally geodesic inside ${\bf M}$. Hence we do not loose generality, (rather we gain), if the statements later refer only to the geodesic completeness of the $(\Sigma; g)$. This is more convenient as the techniques that we use for the different proofs deal only with $(\Sigma; g)$ and not with $({\bf M}; {\bf g})$. 

A non-existence result of geodesically complete solutions is important because it says that any inextensible solution (necessarily geodesically incomplete) has always, roughly speaking, either a horizon or a singularity.
\section{The technical lemmas}\label{THTL}
In this section we state and prove Lemma \ref{MAIN} which is the main technical lemma to be used in applications. We start recalling (without proof) Theorem A.1 from \cite{MR2577473}. To get exactly the expression (\ref{FLAPEST}) from the statement of Theorem A.1 simple replace $m^{n+1}_{H}$ for its equivalent in eq. (3.8) of \cite{MR2577473}. Lemma \ref{MAIN} is discussed afterwards. 

In this Theorem and also below $d_{p}$ is equal to either 
\be
d_{p}=\dist(p,\partial \Sigma),
\ee
if $\partial \Sigma\neq \emptyset$ or
\be
d_{p}=\sup\{\dist(p,x):x\in \Sigma\}
\ee
if $\partial \Sigma = \emptyset$. In particular if $\Sigma$ is non-compact and boundaryless then $d_{p}=\infty$, \footnote{A technical remark is here necessary. For $p\in \Sigma\setminus \partial \Sigma$ and $r_p<d_{p}$ the metric ball $B(p,r_p):=\{q\in \Sigma: \dist(p,q)<r\}$ has the following property: for every $q$ in $B(p,r_p)$, there is at least a length minimising segment joining $p$ to $q$ and entirely inside $B(p,r_p)$. Thus, inside $B(p,r_p)$, the distance function $r(q)=\dist(p,q)$ can be used as any geodesic distance function. These properties may not hold if $r_p>d_{p}$ and this explain why we need the condition $r_p<d_p$ in Theorem \ref{LWW}.}

\begin{Theorem}{\rm (\cite{MR2577473})}\label{LWW} Let $(\Sigma,g)$ be an $n$-dimensional Riemannian manifold. Suppose that
\be
Ric+\nabla\nabla f-\nabla f\nabla f\geq (nH)g
\ee
for some function $f$ and real number $H$. Let $p$ be a point in $\Sigma\setminus \partial \Sigma$ and let $r$ be the distance function to $p$, i.e. $r(x)=\dist(x,p)$. Then, at any $x$ such that $r(x)<d_{p}$ we have
\be\label{FLAPEST}
\Delta_{f}r\leq 
\left\{
\begin{array}{ll}
\frac{\diss n\sqrt{H}}{\diss \tan (\sqrt{H}r)} & {\rm if\ } H>0,\vs\\
\frac{\diss n}{\diss r} & {\rm if\ } H=0,\vs\\
\frac{\diss n\sqrt{|H|}}{\diss \tanh (\sqrt{|H|}r)} & {\rm if\ } H<0.
\end{array}
\right.
\ee
in the barrier sense\footnote{This is an important property as it allows us to make analysis as if $r$ were a smooth function. The reader can consult this notion in \cite{MR2243772}.}. 
\end{Theorem}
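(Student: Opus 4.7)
The plan is to derive a weighted Riccati inequality for $\Delta_f r$ from the Bochner identity and then compare the resulting ODE inequality with the scalar model whose canonical blow-up-at-zero solution is exactly the right-hand side of (\ref{FLAPEST}). First I would work inside the cut locus of $p$, where $r$ is smooth, and let $\gamma$ be the unit-speed minimising geodesic from $p$ to $x$. Applying Bochner to $u=r$ and using $|\nabla r|\equiv 1$ gives $0 = |\nabla^{2} r|^{2} + (\Delta r)' + Ric(\nabla r,\nabla r)$, where the prime denotes derivative along $\gamma$. Since $\nabla^{2} r(\nabla r,\cdot)=0$, the Hessian $\nabla^{2} r$ lives effectively in an $(n-1)$-dimensional subspace, so $|\nabla^{2} r|^{2}\geq (\Delta r)^{2}/(n-1)$ and one obtains the classical Riccati inequality $(\Delta r)'\leq -(\Delta r)^{2}/(n-1) - Ric(\nabla r,\nabla r)$.

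To pass to the $f$-weighted version, differentiate $\Delta_{f}r = \Delta r - \langle \nabla f,\nabla r\rangle$ along $\gamma$; the cross term $\nabla^{2} r(\nabla r,\nabla f)$ vanishes, leaving $(\Delta_{f} r)' = (\Delta r)' - \nabla^{2} f(\nabla r,\nabla r)$. Inserting the hypothesis $Ric + \nabla\nabla f - \nabla f\,\nabla f\geq nHg$ gives $(\Delta_{f} r)' \leq -(\Delta r)^{2}/(n-1) - (\nabla_{\nabla r} f)^{2} - nH$. The decisive algebraic step is the pointwise identity $(a+b)^{2}/(n-1) + b^{2} - a^{2}/n = (a+nb)^{2}/[n(n-1)]\geq 0$, applied with $a=\Delta_{f}r$, $b=\nabla_{\nabla r}f$, $a+b=\Delta r$. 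This absorbs the unwanted $(\nabla_{\nabla r}f)^{2}$ term and simultaneously upgrades the $(n-1)$ into an $n$, yielding the sharp weighted Riccati inequality $(\Delta_{f} r)' \leq -(\Delta_{f} r)^{2}/n - nH$.

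What remains is a scalar ODE comparison with the model $\bar{\phi}'=-\bar{\phi}^{2}/n - nH$. Each branch of the right-hand side of (\ref{FLAPEST}) is a solution of this model, as one checks directly or via the substitution $\bar{\phi}=n s'/s$ with $s''+Hs=0$ and $s(0)=0$. Since $f$ is $C^{1}$ at $p$ we have $\Delta_{f} r \sim (n-1)/r$ as $r\to 0^{+}$, whereas $\bar\phi\sim n/r$, so $\Delta_{f}r - \bar\phi \to -\infty$ initially; a standard Riccati comparison (for instance, by checking that $s\,\Delta_{f}r - n s'$ is non-increasing along $\gamma$ and vanishes in the limit $r\to 0^{+}$) then propagates $\Delta_{f} r \leq \bar\phi$ up to $r=r(x)$. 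For points $x$ that lie on the cut locus of $p$ but still satisfy $r(x)<d_{p}$, I would pass to the barrier sense: for small $\epsilon>0$ the function $r_{\epsilon}(y)=\epsilon + \dist(y,\gamma(\epsilon))$ is smooth near $x$, majorises $r$ with equality at $x$, and has $x$ strictly inside the smooth locus of its new basepoint, so the estimate already established applies to $r_{\epsilon}$; letting $\epsilon\to 0$ delivers the barrier-sense bound. The main obstacle is really the algebraic identity singled out above together with the handling of the $r\to 0^{+}$ initial data in the Riccati comparison: without the identity one would recover only an inequality with $(n-1)$ in place of $n$, which is exactly what the $f$-weighting is supposed to fix, and without the right monitored quantity the initial blow-up prevents a naive comparison.
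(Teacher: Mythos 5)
The paper does not prove this statement: it is quoted verbatim (as Theorem A.1 of \cite{MR2577473}, with $m_H^{n+1}$ substituted out) and explicitly recalled ``without proof,'' so there is no in-paper argument to compare against. Your proof is essentially the standard Wei--Wylie mean-curvature comparison for the Bakry--\'Emery tensor with synthetic dimension $n+1$, and it is correct in all the essential steps: the Bochner/Riccati inequality $(\Delta r)'\leq -(\Delta r)^2/(n-1)-Ric(\partial_r,\partial_r)$, the passage to $(\Delta_f r)'=(\Delta r)'-\nabla^2 f(\partial_r,\partial_r)$ along a radial geodesic, and the algebraic identity $\frac{(a+b)^2}{n-1}+b^2-\frac{a^2}{n}=\frac{(a+nb)^2}{n(n-1)}$ (which I checked) that absorbs the $(\partial_r f)^2$ term coming from the $-\nabla f\nabla f$ part of the hypothesis and upgrades the dimension from $n-1$ to $n$; the Calabi upper-barrier trick at cut points is also handled correctly. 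One small inaccuracy: your monitored quantity $F=s\,\Delta_f r-ns'$ neither vanishes as $r\to 0^+$ nor is it non-increasing; rather $s\,\Delta_f r\to n-1$ while $ns'\to n$, so $F\to -1<0$, and the Riccati inequality gives $F'\leq -(\Delta_f r/n)\,F$, whence $F$ stays negative by Gronwall (sign preservation, not monotonicity). That is exactly what is needed to conclude $\Delta_f r\leq n s'/s$ wherever $s>0$, so the slip is harmless, but the parenthetical as written is not literally true. With that correction your argument is a complete, self-contained proof of a result the paper only cites.
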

Of course we could have $\partial \Sigma=\emptyset$ in which case $\Sigma\setminus \partial \Sigma=\Sigma$. As seen in \cite{MR2577473}, this Theorem implies the following generalised Myers's estimate: if $H>0$, then for any point $p$ we have $d_{p}\leq \pi/\sqrt{H}$. In particular if $\Sigma$ is non-compact then $\partial \Sigma\neq \emptyset$. We will use this property later.

The following is the main technical Lemma to be used and that is adapted from an estimate due to Case \cite{MR2741248}.
\begin{Lemma}\label{MAIN} Let $(\Sigma;g)$, $f$, $H$ and $p$ be as in Theorem \ref{LWW}. Let $\psi$ be a real non-negative function such that
\be\label{FUNDEQ}
\Delta_{f}\psi\geq b\psi+c\psi^{2},
\ee
with $b\leq 0$ and $c>0$. Then,
\be\label{PSITIM}
\psi(p)\leq 
\left\{
\begin{array}{ll}
\frac{\diss 1}{\diss c}\bigg[\frac{\diss 4n+24}{\diss d_{p}^{2}}-b\bigg] & {\rm if\ } H\geq 0,\\
\frac{\diss 1}{\diss c}\bigg[\diss \frac{4n\sqrt{|H|}}{\diss d_{p}\tanh (\sqrt{|H|} d_{p})}+\frac{\diss 24}{\diss d_{p}^{2}}-b\bigg] & {\rm if\ }
H<0
\end{array}
\right.
\ee
\end{Lemma}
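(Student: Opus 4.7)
\textbf{Proof plan for Lemma \ref{MAIN}.}

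I would follow a Yau-type cut-off/maximum-principle argument, adapted to the weighted setting: the $f$-Laplacian plays the role of the ordinary Laplacian and the $f$-Laplacian comparison of Theorem \ref{LWW} replaces the usual Bishop--Gromov inequality. This is exactly the route Case uses in \cite{MR2741248}, which the paper credits as the source of the estimate.

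Fix $r_0<d_p$, write $r(x):=\dist(p,x)$, and choose a smooth nonincreasing cut-off $\phi:[0,r_0]\to[0,1]$ with $\phi(0)=1$, $\phi(r_0)=0$, $\phi'\le 0$, and universal bounds of the shape $\phi''\ge -A/r_0^2$, $(\phi')^2\le B\phi/r_0^2$ (a concrete choice tuned so that the constants emerging below are exactly $4n+24$ and $24$). Set $\eta:=\phi\circ r$ and
\[
F:=\eta^2\psi\quad\text{on}\quad \overline{B(p,r_0)}.
\]
Since $F\ge 0$ and vanishes on $\partial B(p,r_0)$, it attains its maximum at some interior point $x_0$. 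In the trivial case $\psi(x_0)=0$ there is nothing to prove, so I assume $\psi(x_0)>0$. The whole analysis takes place at $x_0$ and is carried out in the barrier sense: when $x_0$ lies in the cut locus of $p$ one works with smooth upper barriers for $r$, as allowed by the barrier-sense formulation of Theorem \ref{LWW}.

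Expanding
\[
\Delta_f F = 2\eta\psi\,\Delta_f\eta + 2\psi|\nabla\eta|^2 + 4\eta\langle\nabla\eta,\nabla\psi\rangle + \eta^2\Delta_f\psi,
\]
using $\eta\nabla\psi=-2\psi\nabla\eta$ at $x_0$ (from $\nabla F(x_0)=0$), and substituting the hypothesis $\Delta_f\psi\ge b\psi+c\psi^2$ into $\Delta_f F(x_0)\le 0$, one obtains, after dividing by $\psi(x_0)>0$,
\[
c\,\eta^2\psi \ \le\ -b\,\eta^2 - 2\eta\,\Delta_f\eta + 6|\nabla\eta|^2\qquad\text{at }x_0.
\]
Since $\nabla\eta=\phi'\nabla r$ and $\Delta_f\eta=\phi''+\phi'\Delta_f r$, the upper bound for $\Delta_f r$ from Theorem \ref{LWW} (together with $\phi'\le 0$, and the observation that when $H>0$ the cotangent turns negative past its first pole, which only helps) yields with the chosen $\phi$
\[
-2\eta\,\Delta_f\eta + 6|\nabla\eta|^2\ \le\ \Theta(r_0,H),
\]
where $\Theta(r_0,H)=(4n+24)/r_0^2$ for $H\ge 0$ and $\Theta(r_0,H)=\dfrac{4n\sqrt{|H|}}{r_0\tanh(\sqrt{|H|}r_0)}+\dfrac{24}{r_0^2}$ for $H<0$. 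Combined with $\eta(p)=1$, $\eta\le 1$, $b\le 0$, and $\psi(p)=F(p)\le F(x_0)$, this gives $\psi(p)\le[\Theta(r_0,H)-b]/c$; letting $r_0\uparrow d_p$ delivers (\ref{PSITIM}).

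The main obstacle I anticipate is pure bookkeeping: pinning down the concrete cut-off $\phi$ so that the constants in $\Theta$ come out exactly to $4n+24$ and $24$ rather than to something larger. All the geometric input has already been packaged into Theorem \ref{LWW}, and the cut-locus issue is handled automatically by its barrier-sense formulation, so beyond selecting $\phi$ the argument is a mechanical computation.
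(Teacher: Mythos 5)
Your proposal is correct and follows essentially the same route as the paper: a cut-off times $\psi$, the maximum principle for $\Delta_f$ with the first-order condition at the interior maximum, and the $f$-Laplacian comparison of Theorem \ref{LWW} to control $\Delta_f r$. The paper simply makes your cut-off explicit, taking $\eta^2=\chi=(r_p^{2}-r^{2})^{2}$, which is exactly the choice that produces the constants $4n+24$ and $24$ (and note that only a bound of the form $(\phi')^{2}\le B/r_0^{2}$ is needed, not $(\phi')^{2}\le B\phi/r_0^{2}$).
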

\begin{proof}For any function $\chi$ the following general formula holds
\be
\Delta_{f}(\chi\psi)=\psi\Delta_{f} \chi+2\langle \nabla\chi,\nabla \psi\rangle +\chi\Delta_{f}\psi
\ee
Thus, if $\chi\geq 0$ and if $\chi\psi$ has a local maximum at $q$, then we have
\begin{align}\label{PRELCALC}
0 & \geq \bigg[\Delta_{f}(\chi\psi)\bigg]\bigg|_{q} \\ 
& \geq  \bigg[\psi \Delta_{f}\chi - 2\frac{|\nabla \chi|^{2}}{\chi}\psi + b\chi\psi+c\chi\psi^{2}\bigg]\bigg|_{q}
\end{align}
where to obtain the second inequality we used (\ref{FUNDEQ}). To simplify notation let $r=r(x)=\dist(x,p)$. Let $r_{p}$ be a positive number less than $d_{p}$. On the ball $B(p,r_p)$ let the function $\chi(x)$ be $\chi(x)=(r_p^{2}-r^{2}(x))^{2}$. Let $q$ be a point in the closure of $B(p,r_p)$ where the maximum of $\chi\psi$ is achieved. As $(\chi\psi)(q)\geq(\chi\psi)(p) = r_{p}^{4}\psi(p)$ we deduce that if $\psi(q) = 0$ then $\psi(p)=0$. In this case (\ref{PSITIM}) follows. So let us assume that $\psi(q)>0$ and hence that $q\in B(p,r_{p})$. By (\ref{PRELCALC}) we have
\begin{align}
\label{PREVIOUS} cr_p^{4}\psi(p) & \leq c(\chi\psi)(q) \leq \bigg[2\frac{|\nabla\chi|^{2}}{\chi}-\Delta_{f}\chi - b\chi \bigg]\bigg|_{q}\\
\label{PREVIOUSII} & \leq \bigg[4(r_p^{2}-r^{2})r\Delta_{f}r+4r_p^{2}+20r^{2}-b r_{p}^{4} \bigg]\bigg|_{q}
\end{align}
But if $Ric_{f}^{\alpha}\geq nH g$ then $\Delta_{f} r$ can be estimated from (\ref{FLAPEST}). Use this estimation in (\ref{PREVIOUSII}), divide by $cr^{4}_{p}$, and take the limit $r_{p}\rightarrow d_{p}$ to obtain (\ref{PSITIM}) by simple bounds.
\end{proof}
\begin{Corollary}\label{CCOR} Assume the hypothesis of Lemma \ref{MAIN} and that $(\Sigma;g)$ is non-compact and geodesically complete. Then,
\be
\psi(p)\leq -\frac{b}{c}
\ee
at any $p\in \Sigma$, regardless of the sign of $H$. 
\end{Corollary}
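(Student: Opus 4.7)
The plan is to deduce this directly from Lemma \ref{MAIN} by sending $d_p \to \infty$. The first step is to invoke the convention laid out in Subsection \ref{THSEA}: when $(\Sigma,g)$ is declared geodesically complete, $\Sigma$ is implicitly boundaryless, so $d_p = \sup\{\dist(p,x) : x \in \Sigma\}$. Combined with the non-compactness assumption, this forces $d_p = \infty$ for every $p \in \Sigma$, and hence the finite parameter $r_p < d_p$ that appeared in the proof of Lemma \ref{MAIN} can be taken arbitrarily large.

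Next I would read off each of the two branches of (\ref{PSITIM}) in this limit. When $H \geq 0$ the correction $(4n+24)/d_p^2$ vanishes, leaving the bare estimate $\psi(p) \leq -b/c$. When $H < 0$ I would observe that $\tanh(\sqrt{|H|}\, d_p) \to 1$, so
\[
\frac{4n\sqrt{|H|}}{d_p \tanh(\sqrt{|H|}\, d_p)} \sim \frac{4n\sqrt{|H|}}{d_p} \longrightarrow 0,
\]
while $24/d_p^2 \to 0$ as well, so once again the right-hand side of (\ref{PSITIM}) converges to $-b/c$. This handles both cases uniformly, which is what the statement demands.

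I do not foresee any genuine obstacle: the cutoff $\chi(x)=(r_p^2-r^2(x))^2$ employed in the proof of Lemma \ref{MAIN} is compactly supported in $B(p,r_p)$ for every finite $r_p < d_p$, irrespective of whether $d_p$ is finite or infinite, and the inequality (\ref{PSITIM}) was precisely the outcome of letting $r_p \to d_p$ at the end. So the corollary reduces to inspecting the asymptotics of (\ref{PSITIM}) as $d_p \to \infty$; the only mild point to keep in mind is that, for $H<0$, one uses $\tanh(\sqrt{|H|}\,d_p)\to 1$ rather than a small-argument expansion, which is why the bound survives uniformly in the sign of $H$.
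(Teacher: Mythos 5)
Your proposal is correct and follows exactly the paper's own route: non-compactness plus geodesic completeness gives $d_p=\infty$, and the bound $\psi(p)\leq -b/c$ is then read off from the asymptotics of (\ref{PSITIM}) in both branches. The paper states this in one line; your additional remarks on the $\tanh(\sqrt{|H|}\,d_p)\to 1$ behaviour simply make explicit what the paper leaves to the reader.
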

\begin{proof} If $(\Sigma;g)$ is non-compact and geodesically complete then $d_{p}=\infty$ and the result follows from (\ref{PSITIM}).
\end{proof}
\section{Applications}\label{APPL}
\subsection{Klein-Gordon}\label{KGSUB}
In this section we study the Klein-Gordon potential $V(|\phi|)=m^{2}|\phi|^{2}$. The mass is allowed to be zero in which case $V=0$. The theorem that follows is perhaps the simplest and most elegant application of the estimates of the previous section.
\begin{Theorem}\label{EKGLZ} Let $(\Sigma; N, g, \phi)$ be a geodesically complete solution of the $n$-dimensional static Einstein-KleinGordon equations. Then, $Ric=0$, $N=N_{0}$ and $\phi=\phi_{0}$, with $\phi_{0}=0$ if $m\neq 0$. In particular if $n=3$ then $(\Sigma;g)$ is covered by the Euclidean three-space. 
\end{Theorem}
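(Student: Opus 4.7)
The plan is to apply Lemma \ref{MAIN} twice -- more precisely its Corollary \ref{CCOR} in the non-compact case, with a direct maximum principle argument replacing it in the compact boundaryless case -- first to $\psi=|\nabla\phi|^{2}$ in order to show that $\phi$ is constant, and then, after the system has reduced to vacuum, to $\psi=|\nabla f|^{2}$ in order to show that $f$ (and hence $N$) is constant. The key enabling fact is that for Klein-Gordon one has $V=m^{2}|\phi|^{2}\geq 0$, so (\ref{SESF1}) gives $Ric^{1}_{f}\geq 0$ and the hypothesis of Lemma \ref{MAIN} holds with $H=0$.

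For the first application I would apply the weighted B\"ochner identity separately to $\phi_{R}$ and $\phi_{I}$ and sum. Using that (\ref{SESF4}) reduces to $\Delta_{f}\phi_{R,I}=m^{2}\phi_{R,I}$ for Klein-Gordon, and rewriting (\ref{SESF1}) as $Ric+\nabla\nabla f = Ric^{1}_{f}+\nabla f\nabla f$, one should obtain, after discarding non-negative curvature terms, an inequality of the form
\[
\Delta_{f}|\nabla\phi|^{2}\;\geq\;2m^{2}|\nabla\phi|^{2}+c\,|\nabla\phi|^{4},\qquad c>0,
\]
the quartic coefficient being supplied precisely by the $\nabla\phi\circ\nabla\bar\phi$ term in $Ric^{1}_{f}$. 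This is an instance of (\ref{FUNDEQ}) (one may take $b=0$ since $2m^{2}|\nabla\phi|^{2}\geq 0$), and Corollary \ref{CCOR} then forces $|\nabla\phi|^{2}\equiv 0$ in the non-compact case; in the compact case, evaluating at a maximum point $p^{*}$ gives $0\geq\Delta_{f}|\nabla\phi|^{2}(p^{*})\geq c\,|\nabla\phi|^{4}(p^{*})$, hence the same conclusion. Thus $\phi\equiv \phi_{0}$, and substituting into (\ref{SESF4}) yields $m^{2}\phi_{0}=0$, so $\phi_{0}=0$ whenever $m\neq 0$; in every case $V(\phi_{0})=0$.

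The system now collapses to the vacuum static equations $Ric+\nabla\nabla f-\nabla f\nabla f=0$ and $\Delta_{f}f=0$, and I would rerun the strategy for $\psi=|\nabla f|^{2}$. The weighted B\"ochner identity applied to $f$, combined with $\Delta_{f}f=0$ and with the vacuum identity $(Ric+\nabla\nabla f)(\nabla f,\nabla f)=|\nabla f|^{4}$, yields $\tfrac{1}{2}\Delta_{f}|\nabla f|^{2}=|\nabla\nabla f|^{2}+|\nabla f|^{4}$. The trace bound $|\nabla\nabla f|^{2}\geq(\Delta f)^{2}/n=|\nabla f|^{4}/n$ (using again $\Delta_{f}f=0$, i.e.\ $\Delta f=|\nabla f|^{2}$) then gives $\Delta_{f}|\nabla f|^{2}\geq\tfrac{2(n+1)}{n}|\nabla f|^{4}$, another pure $c\psi^{2}$ inequality. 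The same dichotomy forces $|\nabla f|^{2}\equiv 0$, so $f$ and $N$ are constant, the first static equation reduces to $Ric=0$, and for $n=3$ Ricci-flat is flat (the Riemann tensor being algebraically determined by the Ricci tensor in three dimensions), so $(\Sigma,g)$ has Euclidean three-space as universal cover.

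The bulk of the technical work is the first B\"ochner derivation, where the complex scalar field has to be treated carefully -- splitting into $\phi_{R}$ and $\phi_{I}$ and summing -- and one must check that the cross-terms do not spoil the quartic lower bound; the corresponding computation for $|\nabla f|^{2}$ in the vacuum regime is substantially cleaner. The compact versus non-compact distinction is only a minor point, since the maximum principle argument replaces Corollary \ref{CCOR} seamlessly.
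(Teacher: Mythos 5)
Your proposal is correct and follows essentially the same route as the paper: the same splitting into $\phi_{R}$ and $\phi_{I}$, the same weighted B\"ochner identity (\ref{BOC}) combined with $\Delta_{f}\phi_{R,I}=m^{2}\phi_{R,I}$ to produce $\Delta_{f}|\nabla\phi|^{2}\geq |\nabla\phi|^{4}$ (the quartic term coming from $\nabla\phi\circ\nabla\bar\phi$ inside $Ric^{1}_{f}$), Corollary \ref{CCOR} with $H=0$, and then a second pass on $\psi=|\nabla f|^{2}$ in the vacuum regime. The only (harmless) divergence is in the compact case, where the paper integrates (\ref{SESF3})--(\ref{SESF4}) against $N$ and $N^{2}$ while you evaluate the differential inequality at an interior maximum; both arguments are valid and equally short.
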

The main B\"ochner type of formula that we are going to use is
\begin{align}\label{BOC}
\frac{1}{2}\Delta_{f}|\nabla \chi|^{2} = & |\nabla\nabla \chi|^{2}+\langle \nabla \chi,\nabla(\Delta_{f}\chi)\rangle\\
& +Ric^{1}_{f}(\nabla\chi,\nabla\chi)+|\langle \nabla\chi,\nabla f\rangle|^{2}
\end{align}
which is valid for any real function $\chi$, \cite{MR2577473}.
\begin{proof} During the proof we make $f=-\ln N$. To start note that if $\Sigma$ is compact and $m\neq 0$ then $\phi=0$ by integrating (\ref{SESF3}) against $N=e^{-f}$. But if $\phi=0$ then $f$ is constant by integrating (\ref{SESF3}) against $N^{2}=e^{-2f}$. Thus $Ric=0$ by (\ref{SESF3}) as claimed. Identical conclusion is reached if $m=0$ by integrating (\ref{SESF3}) agains $N^{2}=e^{-2f}$ and (\ref{SESF4}) against $\phi$.

Assume then from now on that $\Sigma$ is non-compact. Recall that we use the notation $\phi=\phi_{R}+i\phi_{I}$. From (\ref{SESF4}) we obtain 
\be
\Delta_{f}\chi=m^{2}\chi
\ee
for $\chi$ equal to either $\phi_{R}$ or $\phi_{I}$. Use then these two equations to evaluate (\ref{BOC}) with $\chi=\phi_{R}$ and with $\chi=\phi_{I}$. Add up the results and get (after discarding a few positive terms)
\be
\frac{1}{2}\Delta_{f}(|\nabla \phi_{R}|^{2}+|\nabla \phi_{I}|^{2})\geq |\nabla \phi_{R}|^{4}+|\nabla \phi_{I}|^{4}
\ee
Use now $|\nabla \phi|^{2}=|\nabla \phi_{R}|^{2}+|\nabla \phi_{I}|^{2}$ and the inequality $(x^{4}+y^{4})\geq (x^{2}+y^{2})^{2}/2$ to arrive at
\be
\Delta_{f}|\nabla \phi|^{2}\geq |\nabla \phi|^{4}
\ee    
It follows then from Corollary \ref{CCOR} that $\nabla\phi=0$. Hence $\phi=\phi_{0}$, and $\phi_{0}=0$ if $m\neq 0$ from (\ref{SESF4}).

We prove now that the lapse must be constant. From what was proved before we have $Ric^{1}_{f}=0$ and $\Delta_{f}f=0$. Use then (\ref{BOC}) with $\chi=f$ to get
\be
\frac{1}{2}\Delta_{f}|\nabla f|^{2}\geq |\nabla f|^{4}
\ee
Thus, $\nabla f=0$ from Corollary \ref{CCOR} and hence $N=N_{0}$. 

If $f=f_{0}$ then $Ric=0$ from $Ric^{1}_{f}=0$.
\end{proof}
\subsection{$\Lambda$-Klein-Gordon}\label{LKGSUB}
In this section we investigate geodesically complete solutions of the static Einstein-ScalarField equations with potentials of the form $V(\phi)=m^{2}|\phi|^{2}+2\Lambda$. 

The case $\Lambda=0$ was the one considered in the previous section, therefore we consider below only the cases $\Lambda>0$ and $\Lambda<0$. 

$\Lambda > 0$: In this case it is easy to see that there are no geodesically complete solutions at all. Indeed, if $\Sigma$ is compact a contradiction is obtained by integrating (\ref{SESF3}) against $N=e^{-f}$. On the other hand if $\Sigma$ is non-compact, then $\Sigma$ must have boundary because $Ric^{1}_{f}\geq (2\Lambda/(n-1))g$, as we already commented after the statement of Theorem \ref{LWW}. This thus contradicts the assumption that $(\Sigma,g)$ is geodesically complete.

$\Lambda<0$: As explained in the introduction, there are geodesically complete solutions in this case, therefore the best one can do is to understand the local and global geometry. Our first results shows that geodesically complete solutions with $\Sigma$ compact do not exist. Our second result uses this information to provide complete estimates on the scalar field $\phi$. 
\begin{Theorem}\label{NCOMP} Let $(\Sigma;N,g,\phi)$ be a geodesically complete solution of the static Einstein-ScalarField equations with potential $V(\phi)=m^{2}|\phi|^{2}+2\Lambda$, where $\Lambda<0$. Then $\Sigma$ is non-compact. 
\end{Theorem}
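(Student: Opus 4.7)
My plan is to assume for contradiction that $\Sigma$ is compact and obtain a contradiction by two weighted integrations by parts against the Bakry--Emery volume measure $e^{-f}d{\rm v}$. The underlying identity is the divergence form $e^{-f}\Delta_{f}\psi = {\rm div}(e^{-f}\nabla \psi)$, which immediately yields
\begin{equation*}
\int_{\Sigma} e^{-f}\Delta_{f}\psi\, d{\rm v} = 0
\end{equation*}
for every smooth $\psi$ on the closed (boundaryless) $\Sigma$, and likewise the weighted integration by parts formula $\int_{\Sigma} (\Delta_{f}\psi)\,g\,e^{-f}d{\rm v} = -\int_{\Sigma}\langle\nabla\psi,\nabla g\rangle e^{-f}d{\rm v}$.

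First I apply the identity with $\psi=f$. By (\ref{SESF3}) we have $\Delta_{f}f=V(\phi)/(n-1)$, whence
\begin{equation*}
\int_{\Sigma} (m^{2}|\phi|^{2}+2\Lambda)\, e^{-f}\, d{\rm v} = 0, \qquad\text{i.e.}\qquad m^{2}\!\int_{\Sigma}|\phi|^{2}e^{-f}d{\rm v} = -2\Lambda\!\int_{\Sigma}e^{-f}d{\rm v}.
\end{equation*}
If $m=0$ the right-hand side is nonzero while the left vanishes, already a contradiction, so I may assume $m\neq 0$ from here on.

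Second, I split $\phi=\phi_{R}+i\phi_{I}$ and observe that (\ref{SESF4}) for the potential $V(\phi)=m^{2}|\phi|^{2}+2\Lambda$ reads $\Delta_{f}\phi_{R}=m^{2}\phi_{R}$ and $\Delta_{f}\phi_{I}=m^{2}\phi_{I}$. Applying the weighted integration by parts with $\psi=g=\phi_{R}$, and separately with $\psi=g=\phi_{I}$, and summing, produces
\begin{equation*}
\int_{\Sigma}(|\nabla\phi|^{2}+m^{2}|\phi|^{2})\, e^{-f}d{\rm v} = 0.
\end{equation*}
Combining with the first identity gives $\int_{\Sigma}|\nabla\phi|^{2}e^{-f}d{\rm v} = 2\Lambda\int_{\Sigma}e^{-f}d{\rm v}$, whose right-hand side is strictly negative while the left is manifestly non-negative, the desired contradiction.

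I expect no serious obstacle: both integrations are standard within the Bakry--Emery calculus already exploited elsewhere in the paper, and the only mildly subtle point, handling the complex character of $\phi$, is dispatched by splitting into real and imaginary parts. It is worth noting that the argument is insensitive to the sign of $m^{2}$, so it accommodates tachyonic masses (relevant for the sharper estimates the author will prove in Theorem \ref{LPHIEST}). The technical Lemma \ref{MAIN} plays no role in this compactness statement; it is presumably saved for the subsequent study of the non-compact case.
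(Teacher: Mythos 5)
Your proof is correct and is essentially the paper's own argument: the paper likewise tests equation (\ref{SESF4}) against $\bar{\phi}$ with weight $N=e^{-f}$ to get $\int_\Sigma N(|\nabla\phi|^2+m^2|\phi|^2)\,d{\rm v}=0$, and integrates the lapse equation (\ref{SESF3}) in the form $\Delta N=-\tfrac{V}{n-1}N$ over the closed $\Sigma$. You merely perform the two integrations in the opposite order and combine them algebraically (which, as you note, also covers $m^2<0$), rather than first concluding $\phi$ is constant and then deducing $\Lambda=0$.
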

\begin{proof} During the proof we use $f=-\ln N$. Assume that $\Sigma$ is compact. Then observe that as (\ref{SESF4}) is equivalent to 
\be
\nabla(N\nabla \phi)=m^{2}N\phi
\ee
we can multiply this equation by $\bar{\phi}$ an integrate over $\Sigma$ to obtain 
\be
0=\int_{\Sigma}N(|\nabla \phi|^{2}+m^{2}|\phi|^{2})\, d {\rm v}
\ee
This implies $\phi=\phi_{0}$ with $\phi_{0}=0$ if $m\neq 0$. Using this information then note that (\ref{SESF3}) is equivalent to $\Delta N=(2\Lambda/(n-1))N$. Integrating this over $\Sigma$ we deduce $\Lambda=0$, and thus a contradiction.
\end{proof}
\begin{Theorem}\label{LPHIEST} Let $(\Sigma;g,N,\phi)$ be a geodesically complete solution of the static Einstein - Scalar Field equations with potential 
$V(\phi)=m^{2}|\phi|^{2}+2\Lambda$, where $\Lambda<0$. Then the following holds: 
\begin{enumerate}
\item[(i)] if $m^{2}\geq -2\Lambda/(n-1)$ then $\phi$ is identically zero, and,
\item[(ii)] if $m^{2}< -2\Lambda/(n-1)$ then,
\be\label{PHIEST}
|\nabla \phi|^{2}\leq \frac{-4\Lambda}{(n-1)},\qquad m^{2}|\phi|^{2}\leq -64\Lambda.
\ee
In particular $R=|\nabla \phi|^{2}+m^{2}|\phi|^{2}+2\Lambda\leq -66\Lambda$, by a coarse estimation.   
\end{enumerate}
\end{Theorem}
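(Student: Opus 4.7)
I would first invoke Theorem \ref{NCOMP} to reduce to the case $\Sigma$ non-compact, so that Corollary \ref{CCOR} applies. Since $\partial V/2 = m^{2}\phi$ for this potential, (\ref{SESF4}) reduces to $\Delta_{f}\phi_{R}=m^{2}\phi_{R}$ and $\Delta_{f}\phi_{I}=m^{2}\phi_{I}$. Feeding these into (\ref{BOC}) with $\chi=\phi_{R}$ and $\chi=\phi_{I}$, summing, discarding the non-negative Hessian-squares and $|\langle\nabla\phi_{R,I},\nabla f\rangle|^{2}$ terms, and using the positive-semidefiniteness of $\nabla\phi\circ\nabla\bar\phi$ together with $|\nabla\phi_{R}|^{4}+|\nabla\phi_{I}|^{4}\ge\tfrac12|\nabla\phi|^{4}$ (exactly as in the proof of Theorem \ref{EKGLZ}), I reach
\begin{equation*}
\Delta_{f}|\nabla\phi|^{2}\;\ge\;2\bigl(m^{2}+\tfrac{V}{n-1}\bigr)|\nabla\phi|^{2}+|\nabla\phi|^{4}.
\end{equation*}
Since $m^{2}|\phi|^{2}\ge 0$, the coefficient of $|\nabla\phi|^{2}$ is at least $2m^{2}+4\Lambda/(n-1)$.

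Part (i) is then immediate: the hypothesis $m^{2}\ge -2\Lambda/(n-1)$ makes this lower bound non-negative, so dropping it yields $\Delta_{f}|\nabla\phi|^{2}\ge |\nabla\phi|^{4}$; Corollary \ref{CCOR} (with $b=0$, $c=1$) forces $|\nabla\phi|^{2}\equiv 0$, whence $\phi\equiv\phi_{0}$; (\ref{SESF4}) then reduces to $m^{2}\phi_{0}=0$, and since $m^{2}>0$ in case (i), $\phi_{0}=0$. For the first bound of part (ii), I discard instead the non-negative $2m^{2}|\nabla\phi|^{2}$ to obtain $\Delta_{f}|\nabla\phi|^{2}\ge \tfrac{4\Lambda}{n-1}|\nabla\phi|^{2}+|\nabla\phi|^{4}$, and Corollary \ref{CCOR} with $b=4\Lambda/(n-1)<0$ and $c=1$ delivers $|\nabla\phi|^{2}\le -4\Lambda/(n-1)$.

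The pointwise estimate $m^{2}|\phi|^{2}\le -64\Lambda$ will be the genuine obstacle, since $|\phi|^{2}$ itself does not satisfy an inequality of type (\ref{FUNDEQ}): the equation $\Delta_{f}|\phi|^{2}=2m^{2}|\phi|^{2}+2|\nabla\phi|^{2}$ has neither a quartic term nor the right sign on the $|\phi|^{2}$-coefficient. My plan is to recover this bound indirectly by running (\ref{BOC}) on $\chi=f$, where $\Delta_{f}f=V/(n-1)$ by (\ref{SESF3}). The key algebraic observation is a cancellation: controlling the mixed term $\langle\nabla f,\nabla V/(n-1)\rangle$ by Cauchy--Schwarz together with unit-weight AM--GM, $2|\phi||\nabla f||\nabla\phi|\le|\phi|^{2}|\nabla f|^{2}+|\nabla\phi|^{2}$, produces a $-(m^{2}|\phi|^{2}/(n-1))|\nabla f|^{2}$ that cancels \emph{exactly} against the $(m^{2}|\phi|^{2}/(n-1))|\nabla f|^{2}$ emerging from the expansion of $(V/(n-1))|\nabla f|^{2}$ inside $Ric^{1}_{f}(\nabla f,\nabla f)\ge (V/(n-1))|\nabla f|^{2}$; the residual $-m^{2}|\nabla\phi|^{2}/(n-1)$ becomes a mere constant thanks to the already-established first bound. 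Retaining $|\nabla\nabla f|^{2}\ge (\Delta f)^{2}/n=(V/(n-1)+|\nabla f|^{2})^{2}/n$ then yields a differential inequality for $|\nabla f|^{2}$ whose positive source, scaling like $V^{2}$, is the encoding of $m^{2}|\phi|^{2}=V-2\Lambda$. Applying Lemma \ref{MAIN} to a suitable non-negative shift of $|\nabla f|^{2}$---equivalently, analysing the positive root of the resulting right-hand-side quadratic via the $d_{p}\to\infty$ limit of (\ref{PSITIM})---should then force simultaneous pointwise bounds on $|\nabla f|^{2}$ and on $V$, whence on $m^{2}|\phi|^{2}$. The hard part will be to organise this shift so that the sign hypotheses of (\ref{FUNDEQ}) in Lemma \ref{MAIN} are genuinely satisfied, and then to propagate the resulting (deliberately coarse) bound through the chain of inequalities to the announced constant $-64\Lambda$.
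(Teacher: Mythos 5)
Your treatment of part (i) and of the first estimate in (\ref{PHIEST}) is correct and coincides with the paper's: the same B\"ochner computation gives $\Delta_{f}|\nabla\phi|^{2}\geq 2\bigl(m^{2}+\tfrac{2\Lambda}{n-1}\bigr)|\nabla\phi|^{2}+|\nabla\phi|^{4}$, and Corollary \ref{CCOR} (applicable since Theorem \ref{NCOMP} gives non-compactness) does the rest. The gap is in the second estimate, $m^{2}|\phi|^{2}\leq-64\Lambda$, which you rightly single out as the real difficulty but for which your plan cannot work. The obstruction is structural: Lemma \ref{MAIN}, and the maximum-principle argument behind it, produces a pointwise bound at the arbitrary base point $p$ only for the function $\psi$ standing on the \emph{left} of $\Delta_{f}\psi\geq b\psi+c\psi^{2}$. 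An additional non-negative term on the right-hand side --- such as your $|\nabla\nabla f|^{2}\geq\bigl(V/(n-1)+|\nabla f|^{2}\bigr)^{2}/n$ --- only makes the hypothesis easier to satisfy; in the proof it is evaluated at the auxiliary maximum point $q$ of $\chi\psi$, a point you do not control and at which the cut-off $\chi$ may be arbitrarily small, and it is then discarded in the limit $r_{p}\to d_{p}$. It therefore yields no pointwise information about $V$, hence none about $m^{2}|\phi|^{2}=V-2\Lambda$, at $p$. A constant shift of $|\nabla f|^{2}$ does not incorporate $V$ into $\psi$, while a shift by a multiple of $V$ destroys non-negativity; and the residual $-m^{2}|\nabla\phi|^{2}/(n-1)$ you mention is itself a negative constant source term, which already falls outside the hypotheses (\ref{FUNDEQ}) of Lemma \ref{MAIN}.

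The missing idea is not another differential inequality but a quantitative Myers-type argument. From $|\nabla\phi|^{2}\leq-4\Lambda/(n-1)$ and $m^{2}<-2\Lambda/(n-1)$ one gets the Lipschitz bound $m|\nabla\phi|\leq\sqrt{8}\,(-\Lambda)/(n-1)$, so if $m|\phi(p_{0})|\geq 8\sqrt{-\Lambda}$ at some point $p_{0}$ then $m|\phi|\geq(8-2\sqrt{2})\sqrt{-\Lambda}\geq 5\sqrt{-\Lambda}$ on the entire ball $B\bigl(p_{0},(n-1)/\sqrt{-\Lambda}\bigr)$. On that ball $V=m^{2}|\phi|^{2}+2\Lambda\geq-23\Lambda>0$, hence $Ric^{1}_{f}\geq\bigl(\tfrac{-23\Lambda}{n-1}\bigr)g=(nH)g$ by (\ref{SESF1}), and the generalised Myers estimate recorded after Theorem \ref{LWW} forces minimising geodesics issued from $p_{0}$ inside this region to have length at most $\pi/\sqrt{H}=\pi\sqrt{n(n-1)}/(\sqrt{23}\sqrt{-\Lambda})$. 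Since $\Sigma$ is complete and non-compact, there are minimising segments from $p_{0}$ of length $(n-1)/\sqrt{-\Lambda}$ inside the ball, and the inequality $(n-1)/\sqrt{-\Lambda}\leq\pi\sqrt{n(n-1)}/(\sqrt{23}\sqrt{-\Lambda})$ fails for every $n\geq2$. Hence $m|\phi|<8\sqrt{-\Lambda}$ everywhere, which is exactly the claimed bound $m^{2}|\phi|^{2}\leq-64\Lambda$.
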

\begin{proof} During the proof we use $f=-\ln N$. Use (\ref{BOC}) with $\chi=\phi_{R}$ and with $\chi=\phi_{I}$ and add up the results to obtain (after discarding a few positive terms)
\be
\Delta |\nabla \phi|^{2}\geq 2(m^{2}+\frac{2\Lambda}{n-1})|\nabla \phi|^{2}+|\nabla \phi|^{4}
\ee
Hence, if $m^{2}\geq -2\Lambda/(n-1)$ then $\phi$ is constant by Corollary \ref{CCOR}. But if $\phi$ is constant and $m^{2}> 0$ then $\phi$ must be indeed zero by equation (\ref{SESF4}). 

Let us assume then that $m^{2}<-2\Lambda/(n-1)$. By Corollary \ref{CCOR} we have 
\be
|\nabla \phi|^{2}\leq -2(\frac{2\Lambda}{(n-1)}+m^{2})\leq \frac{-4\Lambda}{(n-1)}
\ee
which shows the first estimate of (\ref{PHIEST}). Using this estimate together with $m^{2}<-2\Lambda/(n-1)$ we deduce 
\be\label{NPHIEST}
m|\nabla \phi|\leq \frac{\sqrt{8}\sqrt{-\Lambda}}{(n-1)}
\ee
The convenience of this estimate is the following. If two points $p_{0}$ and $p$ are separated by a distance $L$ then
\begin{align}\label{INCRE}
m|\phi(p_{0})| -m|\phi(p)| & \leq |m\phi(p_{0})-m\phi(p)|\\ 
& =\big|\int_{\gamma} m\nabla_{\gamma'}\phi\, ds\big|\\ 
& \leq \frac{\sqrt{8}\sqrt{-\Lambda}}{(n-1)}L
\end{align}
where $\gamma(s)$ is a length minimising geodesic segment joining $p_{0}$ to $p$. Hence, if at a point $p_{0}$ we have
\be\label{IF}
m|\phi(p_{0})|\geq 8\sqrt{-\Lambda}
\ee
then 
\be\label{MOM}
m|\phi(p)|\geq 5\sqrt{-\Lambda}
\ee
at every point $p$ of the ball $B(p_{0},(n-1)/\sqrt{-\Lambda})$ because using (\ref{INCRE}) we would have $m|\phi(p)|\geq 8\sqrt{-\Lambda}-2\sqrt{2}\sqrt{-\Lambda}\geq 5\sqrt{-\Lambda}$. Assume then that (\ref{MOM}) holds on $B(p_{0},(n-1)/\sqrt{-\Lambda})$. Then by (\ref{SESF1}) we would have
\be
Ric^{1}_{f}\geq \bigg(\frac{-23\Lambda}{n-1}\bigg)g=(nH)g
\ee
where the r.h.s is the definition of $H$. But then the radius of the ball, $(n-1)/\sqrt{-\Lambda}$, should be less or equal than $\pi/\sqrt{H}$, in other words we should have
\be
\frac{n-1}{\sqrt{-\Lambda}}\leq \frac{\pi \sqrt{n(n-1)}}{\sqrt{23}\sqrt{-\Lambda}}
\ee
But his equation doesn't hold for any $n\geq 2$. Thus, (\ref{IF}), (hence (\ref{MOM})), cannot hold and we have
\be
m^{2}|\phi|^{2}\leq -64\Lambda
\ee
which is the second estimate of (\ref{PHIEST}).
\end{proof}
So far, Theorem \ref{LPHIEST} provides complete estimates for the scalar field $\phi$. We occupy now ourselves with the Lorentzian geometry, namely with $N$ and $g$. As we show below, gradient estimates for $\ln N$ can be provided in any dimension but pointwise curvature estimate only in spatial dimension three. We start proving estimates for $N$. 
\begin{Theorem}\label{EKGLEDEST} Let $(\Sigma;g,N,\phi)$ be a geodesically complete solution of the static Einstein-ScalarField equations with potential 
$V(\phi)=m^{2}|\phi|^{2}+2\Lambda$, where $\Lambda<0$. Then, the following holds,
\begin{enumerate}
\item[(i)] if $m^{2}\geq -2\Lambda/(n-1)$, then
\be\label{NESTIO}
\frac{|\nabla N|}{N}\leq \sqrt{\frac{-2\Lambda}{n-1}}
\ee
and,
\item[(ii)] if $m^{2}< -2\Lambda/(n-1)$ then
\be\label{NESTIOI}
\frac{|\nabla N|}{N}\leq 64\sqrt{-\Lambda}
\ee
\end{enumerate}
\end{Theorem}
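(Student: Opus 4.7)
The plan is to apply the Böchner identity (\ref{BOC}) with $\chi=f$ and reduce both parts to Corollary \ref{CCOR}, paralleling the proofs of Theorems \ref{EKGLZ} and \ref{LPHIEST}. By Theorem \ref{NCOMP}, $\Sigma$ is non-compact, so Corollary \ref{CCOR} is available. The static equations (\ref{SESF3}) and (\ref{SESF1}) give $\Delta_{f}f=V/(n-1)$ and $Ric^{1}_{f}(\nabla f,\nabla f)\geq (V/(n-1))|\nabla f|^{2}$, after discarding the non-negative contribution of $\nabla\phi\circ\nabla\bar\phi$. Dropping also $|\nabla\nabla f|^{2}\geq 0$ in (\ref{BOC}) gives the basic inequality
\[
\tfrac{1}{2}\Delta_{f}|\nabla f|^{2}\;\geq\;|\nabla f|^{4}+\frac{V}{n-1}|\nabla f|^{2}+\frac{1}{n-1}\langle\nabla f,\nabla V\rangle.
\]

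In case (i), Theorem \ref{LPHIEST}(i) forces $\phi\equiv 0$, so $V\equiv 2\Lambda$ and $\nabla V\equiv 0$. Writing $\psi=|\nabla f|^{2}$, the inequality becomes $\Delta_{f}\psi\geq 2\psi^{2}+\tfrac{4\Lambda}{n-1}\psi$, and Corollary \ref{CCOR} delivers $\psi\leq -2\Lambda/(n-1)$, which is (\ref{NESTIO}).

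In case (ii) the essential new difficulty is the gradient term $\langle\nabla f,\nabla V\rangle$, which no longer vanishes. The idea is to bound it via Theorem \ref{LPHIEST}: from $|\nabla V|\leq 2m^{2}|\phi||\nabla\phi|=2(m|\phi|)(m|\nabla\phi|)$, together with $m|\phi|\leq 8\sqrt{-\Lambda}$, $m^{2}<-2\Lambda/(n-1)$ and $|\nabla\phi|^{2}\leq -4\Lambda/(n-1)$, one obtains a uniform bound of the form $|\nabla V|\leq K_{0}(-\Lambda)^{3/2}$ with $K_{0}$ depending only on $n$. Combining $V\geq 2\Lambda$ with Cauchy--Schwarz on $\langle\nabla f,\nabla V\rangle$ and then applying Young's inequality $K|\nabla f|\leq\tfrac{1}{4}|\nabla f|^{4}+\tfrac{3}{4}K^{4/3}$ to the resulting sub-quadratic term produces a pointwise inequality of the form
\[
\Delta_{f}\psi\;\geq\;c\psi^{2}+b\psi-D,
\]
with $c>0$, $b<0$ and an unwanted constant $D>0$, all depending only on $\Lambda$ and $n$.

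The last step, and the only genuinely new twist compared with (i), is to absorb $D$. I would introduce the shifted non-negative function $\tilde\psi=\psi+A$, choosing $A>0$ so that $cA^{2}-bA-D=0$; such an $A$ exists because $b<0$, and depends only on $\Lambda,n$. Substituting $\psi=\tilde\psi-A$ turns the previous inequality into $\Delta_{f}\tilde\psi\geq c\tilde\psi^{2}+(b-2cA)\tilde\psi$, which is in the exact form required by Corollary \ref{CCOR} (with linear coefficient $b-2cA<0$). Applying the corollary and translating back gives $\psi\leq A-b/c$, a constant multiple of $-\Lambda$; a coarse numerical estimate of the proportionality constant then yields $|\nabla f|\leq 64\sqrt{-\Lambda}$, which is (\ref{NESTIOI}). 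The main obstacle is precisely the management of $\langle\nabla f,\nabla V\rangle$ in case (ii); once this is converted via Young's inequality and the shift into the ``$b\psi+c\psi^{2}$''-form, Corollary \ref{CCOR} closes the argument exactly as in the simpler situations already treated in the paper.
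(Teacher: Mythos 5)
Your proposal is correct, and its skeleton coincides with the paper's: apply (\ref{BOC}) with $\chi=f$, use (\ref{SESF3}) and (\ref{SESF1}) to identify $\Delta_f f$ and to bound $Ric^1_f(\nabla f,\nabla f)$ from below, dispose of case (i) immediately via Corollary \ref{CCOR}, and in case (ii) control the cross term $\langle\nabla f,\nabla V\rangle/(n-1)$ by the pointwise bounds on $m|\phi|$, $m$ and $|\nabla\phi|$ from Theorem \ref{LPHIEST}. Where you genuinely diverge is in the last step of (ii). The paper keeps the resulting inequality in the form $\Delta_f\psi\geq a\sqrt{\psi}+b\psi+c\psi^2$ and appeals to an unproved ``simple modification'' of Lemma \ref{MAIN} yielding $\psi(p)\leq\max\{(a/b)^2,-2b/c\}$; you instead eliminate the sub-quadratic term by Young's inequality, $K\sqrt{\psi}\leq\frac14\psi^2+\frac34K^{4/3}$, and then remove the resulting additive constant $D$ by the shift $\tilde\psi=\psi+A$ with $cA^2-bA-D=0$, so that Corollary \ref{CCOR} applies verbatim to $\tilde\psi$. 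This is a clean and fully self-contained reduction (the positive root $A$ exists simply because $c,D>0$, independently of the sign of $b$), and a quick computation with your constants gives $\psi\leq A-b/c\leq \tilde K^{2/3}+\tfrac{8}{3(n-1)}(-\Lambda)$ with $\tilde K\lesssim 32\sqrt{2}\,(-\Lambda)^{3/2}$, which is well below $4096(-\Lambda)$; so (\ref{NESTIOI}) follows with room to spare, in fact with a noticeably better numerical constant than $64$. In short: same analytic engine, but your handling of the $\sqrt{\psi}$ term replaces an asserted variant of Lemma \ref{MAIN} with an elementary absorption argument that reduces everything to the corollary as stated, which is arguably the more rigorous presentation.
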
  
\begin{proof} During the proof we use $f=-\ln N$. Use (\ref{BOC}) with $\chi=f$, discard a pair of terms and obtain
\be
\frac{1}{2}\Delta_{f}|\nabla f|^{2}\geq \langle \nabla f,\frac{m^{2}\nabla (|\phi|^{2})}{n-1}\rangle +\frac{2\Lambda}{n-1}|\nabla f|^{2}+|\nabla f|^{4}
\ee

If $m^{2}\geq -2\Lambda/(n-1)$ then $\nabla \phi=0$ and the first term in the r.h.s of the previous equation is zero. We can use Corollary  \ref{CCOR} to obtain $|\nabla f|^{2}\leq -2\Lambda/(n-1)$, which is (\ref{NESTIO}). 

Assume now that $m^{2}< -2\Lambda/(n-1)$. We need to bound the first term in the r.h.s of the previous equations. We do this as follows. First write
\begin{align}
|\langle \nabla f, & \frac{m^{2}\nabla (|\phi|^{2})}{n-1}\rangle| = \\ 
& =  |2m^{2}(\phi_{R}\langle \nabla f,\nabla \phi_{R}\rangle +\phi_{I}\langle \nabla f,\nabla \phi_{I}\rangle| \\
\label{TOBOU} & \leq 2m(m|\phi_{R}||\nabla \phi_{R}|+m|\phi_{I}||\nabla \phi_{I}|)|\nabla f| 
\end{align}
Use now Theorem (\ref{LPHIEST}) to bound (\ref{TOBOU}) as
\be
2m(m|\phi_{R}||\nabla \phi_{R}|+m|\phi_{I}||\nabla \phi_{I}|)|\nabla f| \leq \frac{128}{n-1}(-\Lambda)^{3/2}|\nabla f|
\ee
Thus,
\be
\langle \nabla f,\frac{m^{2}\nabla (|\phi|^{2})}{n-1}\rangle\geq -\frac{128}{n-1}(-\Lambda)^{3/2}|\nabla f|
\ee
Hence
\be
\frac{1}{2}\Delta_{f}|\nabla f|^{2}\geq  -\frac{128}{n-1}(-\Lambda)^{3/2}|\nabla f|+\frac{2\Lambda}{n-1}|\nabla f|^{2}+|\nabla f|^{4}
\ee
Making $\psi=|\nabla f|^{2}$ we can write
\be\label{FUNDEQI}
\Delta_{f}\psi\geq a\sqrt{\psi}+b\psi+c\psi^{2}
\ee
where $a=-256(-\Lambda)^{3/2}/(n-1)$, $b=4\Lambda/(n-1)$ and $c=2$. This equation is not the same as (\ref{FUNDEQ}) and Corollary \ref{CCOR} cannot be directly used. However a simple modification of the arguments of Lemma \ref{MAIN} shows that, if (\ref{FUNDEQI}) holds, then 
\be
\psi(p)\leq \max\bigg\{\bigg(\frac{a}{b}\bigg)^{2},-\frac{2b}{c}\bigg\}
\ee
Using this with the values of $a, b$ and $c$ given before we obtain (\ref{NESTIOI}).
\end{proof}
The following theorem proves that, when the spatial dimension is three, the Ricci curvature is bounded by an expression depending only on $\Lambda$. The proof uses some advanced elements of Riemannian geometry. 

\begin{Theorem}\label{RCEST}Let $(\Sigma;g,N,\phi)$ be a geodesically complete solution of the static Einstein-ScalarField equations with potential 
$V(\phi)=m^{2}|\phi|^{2}+2\Lambda$, where $\Lambda<0$ and in spacetime dimension four (i.e. $n=3$). Then,
\be\label{RICEST}
|Ric|\leq \mathcal{R}(|\Lambda|)
\ee
for some non-negative function $\mathcal{R}$.
\end{Theorem}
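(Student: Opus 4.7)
The plan is to argue by contradiction via a blow-up/rescaling analysis, leveraging the pointwise estimates of the previous three theorems together with Cheeger-Gromov type compactness in dimension three, where the Ricci tensor controls the full Riemann tensor. Suppose the conclusion fails, so there exist points $p_k\in \Sigma$ with $\lambda_k:=|Ric|(p_k)\to \infty$. After a standard point-picking adjustment (in the spirit of Perelman) we may assume in addition that $|Ric|(q)\leq 2\lambda_k$ on the metric ball $B(p_k,r_k/\sqrt{\lambda_k})$ for some $r_k\to \infty$.

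Rescale by setting $\tilde g_k=\lambda_k g$. Then $|Ric_{\tilde g_k}|(p_k)=1$, and $|Ric_{\tilde g_k}|\leq 2$ on the $\tilde g_k$-ball of radius $r_k\to\infty$ about $p_k$. Under this rescaling all our a priori bounds degenerate favorably: the effective cosmological constant is $\tilde\Lambda_k=\Lambda/\lambda_k\to 0$, Theorem \ref{LPHIEST} gives $|\nabla\phi|_{\tilde g_k}\to 0$ and $m^{2}|\phi|^{2}$ stays bounded, Theorem \ref{EKGLEDEST} gives $|\nabla f|_{\tilde g_k}\to 0$, while the static equation (\ref{SESF1}) gives
\[
Ric^{1}_{f}\geq \frac{2\tilde\Lambda_{k}}{n-1}\, \tilde g_{k}\longrightarrow 0.
\]
Since $n=3$, Ricci bounds on the ball translate into bounded sectional curvature, so we are in a position to apply Anderson's $C^{1,\alpha}$-compactness for pointed Riemannian manifolds with bounded Ricci, provided non-collapse holds at $p_k$. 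For the latter, the near-constancy of $f$ on $\tilde g_k$-unit balls allows one to equate $f$-volume and Riemannian volume up to a factor tending to $1$, and then Bishop-Gromov for $Ric_{f}^{1}$ (which has an asymptotically non-negative lower bound) combined with the point-picking normalization gives a uniform lower bound on ${\rm vol}_{\tilde g_k}(B(p_k,1))$.

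Extract then a Cheeger-Gromov limit $(\Sigma_\infty;g_\infty,N_\infty,\phi_\infty,p_\infty)$. The equations pass to the limit (elliptic regularity on $\phi$ and $f$ upgrades the convergence locally to $C^{2,\alpha}$), producing a geodesically complete static Einstein-KleinGordon solution with effective potential $V_\infty=m^{2}|\phi_\infty|^{2}$ and $\Lambda_\infty=0$, to which Theorem \ref{EKGLZ} applies: the limit is Ricci-flat. This contradicts the fact that $|Ric_{g_\infty}|(p_\infty)=\lim_k |Ric_{\tilde g_k}|(p_k)=1$, where the pointwise identity is justified because in dimension three Anderson's convergence gives $C^{0}$ convergence of the Ricci tensor in harmonic coordinates. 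This contradiction proves the existence of the required function $\mathcal{R}(|\Lambda|)$.

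The main obstacle is the non-collapsing step: one must guarantee a uniform lower bound on the Riemannian volume of rescaled unit balls around the chosen blow-up points without circular use of curvature bounds. The trick is to exploit that the hypothesis $Ric_{f}^{1}\geq (2\Lambda/(n-1))g$ is \emph{preserved} (after rescaling, with the lower bound tending to zero), so Bishop-Gromov for the Bakry-Emery Ricci tensor applies uniformly, and then the smallness of $|\nabla f|_{\tilde g_k}$ translates $f$-volume bounds into honest Riemannian volume bounds. A secondary technical point is verifying convergence of the scalar field: the bounds $|\phi|,|\nabla \phi|$ from Theorem \ref{LPHIEST} together with the equation $\Delta_f \phi=m^{2}\phi$ provide the higher-order control needed to pass (\ref{SESF1}) to the limit.
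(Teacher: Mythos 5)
Your proposal takes a genuinely different route from the paper (blow-up plus Cheeger--Gromov compactness versus the paper's direct local elliptic estimates), but it has a gap at exactly the point you flag as the ``main obstacle'': the non-collapsing step, and the fix you propose does not work. Bishop--Gromov comparison for $Ric_f^1$ (or for $Ric$) bounds volume \emph{ratios} from above relative to the model; it can never produce an absolute lower bound on ${\rm vol}_{\tilde g_k}(B(p_k,1))$ without a lower volume bound already in hand at some larger scale, and no such bound is available here. The point-picking normalization $|Ric_{\tilde g_k}|(p_k)=1$ carries no volume information either. Collapse with bounded curvature and (almost) non-negative Bakry--Emery Ricci is a real possibility --- think of flat metrics on $T^2_\epsilon\times\mathbb{R}$ with $f$ constant --- and it is precisely the difficulty that the paper's introduction points to when it says Anderson's vacuum argument ``relies heavily on special properties of the Cheeger--Gromov theory of convergence and collapse.'' Handling the collapsed case would require an F-structure/collapse analysis that your argument does not supply. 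A secondary overstatement: Anderson's $C^{1,\alpha}$ compactness does not by itself give $C^0$ convergence of $Ric$; you would have to upgrade to $C^{2,\alpha}$ by writing the full static system (metric included, not just $f$ and $\phi$) as an elliptic system in harmonic coordinates, which is plausible but needs to be done.

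The paper's actual proof sidesteps non-collapsing entirely, and it is worth seeing how. It reduces the claim to a pointwise bound on $|\nabla\nabla f|$ (everything else in (\ref{SESF1}) is already controlled by Theorems \ref{LPHIEST} and \ref{EKGLEDEST}), passes to the conformal metric $\check g=N^2g$ in which (\ref{SESF1C}) bounds $|\check{Ric}|_{\check g}$ pointwise with \emph{no Hessian term}, uses $n=3$ to convert this into a bound on $|\check{Rm}|_{\check g}$, and then --- this is the key move --- pulls the metric back by the exponential map to a ball in $T_p\Sigma$, where the injectivity radius at the center is the radius of the ball by construction. On that local cover the harmonic radius is controlled and Schauder estimates for (\ref{SESF3C}) give the Hessian bound. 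Your blow-up strategy could in principle be repaired by the same local-cover device (do the compactness on the pulled-back balls, where non-collapsing is automatic), but at that point one no longer needs a limit space or Theorem \ref{EKGLZ} at all: the interior elliptic estimate already yields the curvature bound directly. As written, the proposal does not establish the theorem.
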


\begin{proof} In Theorems \ref{LPHIEST} and \ref{EKGLEDEST} we deduced pointwise bounds for $|\nabla f|$, $|\nabla \phi|$ and for $m^{2}|\phi|^{2}$ depending only on $\Lambda$. Therefore, recalling (\ref{SESF1}), the estimate (\ref{RICEST}) would follow granted we can prove a pointwise estimate of $|\nabla\nabla f|$ depending only on $\Lambda$. We prove now that this is possible when $n=3$.

Let $p$ be an arbitrary point in $\Sigma$. Assume that $N(p)=1$. (If $N(p)\neq 1$ then work with the scaled lapse $N/N(p)$. Observe that the system (\ref{SESF1})-(\ref{SESF4})  is invariant under scalings of the lapse. Below we use therefore $f=-\ln N$ and we assume $N(p)=1$). 

To start note that the estimates of Theorem \ref{EKGLEDEST} imply\footnote{Just integrate $\nabla \ln N$ along radial geodesics and used then the bound $|\nabla \ln N|\leq 64\sqrt{-\Lambda}$.} 
\be\label{fBOUND}
|f|(q)\leq K_{0}(\Lambda) 
\ee
for every $q$ in $B_{g}(p,1)$ and where $K_{0}(\Lambda) =64\sqrt{-\Lambda}$. Hence we can write
\be\label{NESTT}
K_{1}(\Lambda)^{-1}\leq N(q)\leq K_{1}(\Lambda),
\ee
for every $q$ in $B_{g}(p,1)$ and where $K_{1}(\Lambda)=e^{K_{0}(\Lambda)}$. As we mentioned earlier, the Theorems  \ref{LPHIEST} and \ref{EKGLEDEST} give us suitable bounds for $|\nabla f|$, $|\nabla \phi|$ and for $m|\phi|$. From such bounds one can write down the coarse estimate
\be\label{KTWPEST}
|\nabla f| + |\nabla \phi| + m|\phi|\leq K_{2}(\Lambda)
\ee
for some $K_{2}(\Lambda)$. This is all what we will need later. We will refer to it a couple of times.  

From now on we will use the metric 
\be
\check{g}:=N^{2}g
\ee
In terms of the variables $(\check{g},N,\phi)$, the static equations (\ref{SESF1})-(\ref{SESF4}) are, 
\begin{align}
\label{SESF1C} & \check{Ric} = 2\nabla f\nabla f +\nabla \phi\circ \nabla \bar{\phi} +\frac{V(\phi)}{2} e^{2f} \check{g},\\
\label{SESF3C} & \check{\Delta} f = \frac{1}{2}V(\phi)e^{f},\\
\label{SESF4C} & \check{\Delta} \phi = \frac{1}{2}\partial V(\phi) e^{f},
\end{align}
Now, use the bounds (\ref{fBOUND}) and (\ref{KTWPEST}) in the formula (\ref{SESF1C}) to deduce that $|\check{Ric}|_{\check{g}}$ is pointwise bounded in $B_{g}(p,1)$, where the bound depends only on $\Lambda$. Thus we have
\be\label{RICCHB}
|\check{Ric}|_{\check{g}}\leq K_{3}(\Lambda)
\ee
As we are working in dimension three, where the Riemann tensor is made out of the Ricci tensor, the bound (\ref{RICCHB}) implies a bound also for the Riemann tensor $\check{Rm}$ on $B_{g}(p,1)$ and thus we have,
\be\label{RMEST}
|\check{Rm}|_{\check{g}}\leq K_{4}(\Lambda)
\ee

Now, it is direct to see from (\ref{NESTT}) that one can find $\check{r}_{1}(\Lambda)$ such that 
\be
B_{\check{g}}(p,\check{r}_{1})\subset B_{g}(p,1/2).
\ee 
Moreover, it is a standard fact in Riemannian geometry that a bound on the Riemann tensor as (\ref{RMEST}) implies that, for some $\check{r}_{2}(\Lambda)<\check{r}_{1}(\Lambda)$, the exponential map
\be
exp:U(p,\check{r}_{2})\rightarrow B_{\check{g}}(p,\check{r}_{2}) 
\ee
is a smooth cover, where in this formula $U(p,\check{r}_{2})$ is the ball of radius $\check{r}_{2}$ in $T_{p}\Sigma$, (endowed with the metric $\check{g}(p)$, namely $U(p,r):=\{v\in T_{p}\Sigma:|v|_{\check{g}(p)}\leq r\}$). 
Provide now $U(p,\check{r}_{2})$ with the pull-back metric $\check{g}^{*}=exp^{*}\check{g}$. The injectivity radius at $p$ of the space $(U(p,\check{r}_{2}),\check{g}^{*})$ is of course equal to $\check{r}_{2}$ and the Riemann tensor of $\check{g}^{*}$ is subject to the same bound (\ref{RMEST}) as $\check{g}$. Therefore, the {\it harmonic radius} of the space $(U(p,\check{r}_{2}), \check{g}^{*})$ at $p$ is controlled from below only by $\Lambda$, (see \cite{MR2243772}, $\S$ Chp. 10.5.2). To us, the only important consequence of this is that one can make standard elliptic analysis on $(U(p,\check{r}_{3}),\check{g}^{*})$ for a suitable $\check{r}_{3}(\Lambda)\leq\check{r}_{2}(\Lambda)$. 
Hence, we can use the bounds (\ref{fBOUND})-(\ref{KTWPEST}) to obtain Schauder interior elliptic estimates from the elliptic system (\ref{SESF3C})-(\ref{SESF4C}), (see \cite{MR2243772}, $\S$ Chp. 10.2). Doing so we get
\be\label{BBB}
|\check{\nabla}\check{\nabla} f|_{\check{g}}(p)\leq K_{5}(\Lambda)
\ee  
Use now the expression,
\begin{align}
\check{\nabla}_{i}\check{\nabla}_{j}f=\nabla_{i} \nabla_{j}f+2\nabla_{j}f\nabla_{i}f -|\nabla f|_{g}^{2}g_{ij}
\end{align}
and the bounds (\ref{BBB}), (\ref{fBOUND}) and (\ref{KTWPEST}), to deduce directly the bound
\be
|\nabla\nabla f|_{g}(p)\leq K_{6}(\Lambda)
\ee
as wished.
\end{proof}

\subsection{Real scalar fields\label{RSF}}
General interesting results can be obtained when $\phi$ is real. The following theorem, gives a simple condition for $V(\phi)$ that forces $\phi$ to be a constant. It gives nice applications that will be illustrated very briefly below.
\begin{Theorem} Let $(\Sigma;g,N,\phi)$ be a geodesically complete solution of the static Einstein-RealScalarField system with potential $V(\phi)$. If $V$ is bounded below and
\be\label{CONDV}
V''(x)+\frac{V(x)}{n-1}\geq 0,
\ee
for all $x$, then $\phi=\phi_{0}$, (constant), and $\phi_{0}$ is a critical point of $V(\phi)$. 
\end{Theorem}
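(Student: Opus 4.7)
The plan is to apply the Bochner identity (\ref{BOC}) to $\chi=\phi$, exploiting the real scalar-field equation (\ref{SESF4}) in the form $\Delta_f\phi=V'(\phi)/2$ (which gives $\langle\nabla\phi,\nabla(\Delta_f\phi)\rangle=\tfrac{1}{2}V''(\phi)|\nabla\phi|^2$) together with the static Einstein equation (\ref{SESF1}), which gives $Ric^1_f(\nabla\phi,\nabla\phi)=|\nabla\phi|^4+\tfrac{V(\phi)}{n-1}|\nabla\phi|^2$. Discarding the non-negative terms $|\nabla\nabla\phi|^2$ and $\langle\nabla\phi,\nabla f\rangle^2$ then produces
\[
\Delta_f|\nabla\phi|^2 \;\geq\; \Bigl[V''(\phi)+\tfrac{2V(\phi)}{n-1}\Bigr]|\nabla\phi|^2 \;+\; 2|\nabla\phi|^4.
\]
The hypothesis $V''(x)+V(x)/(n-1)\geq 0$ bounds the bracket from below by $V(\phi)/(n-1)\geq V_{\min}/(n-1)$, placing this inequality within the framework of Lemma \ref{MAIN} and Corollary \ref{CCOR} with $\psi=|\nabla\phi|^2$, $c=2$, and $b:=\min\{0,V_{\min}/(n-1)\}\leq 0$.

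In the compact case I would integrate the Bochner inequality against the weighted volume form $e^{-f}dv$. The left-hand side integrates to zero, and the integration-by-parts identity $\int V''(\phi)|\nabla\phi|^2 e^{-f}dv = -\int V'(\phi)\Delta_f\phi\,e^{-f}dv = -\tfrac{1}{2}\int V'(\phi)^2 e^{-f}dv$ reduces the result to
\[
0\;\geq\; -\tfrac{1}{2}\!\int V'(\phi)^2 e^{-f}dv \;+\; 2\!\int|\nabla\phi|^4 e^{-f}dv \;+\; \tfrac{2}{n-1}\!\int V(\phi)|\nabla\phi|^2 e^{-f}dv.
\]
Multiplying the hypothesis by $|\nabla\phi|^2 e^{-f}$ and integrating (using the same IBP identity) supplies the complementary bound $\tfrac{2}{n-1}\int V(\phi)|\nabla\phi|^2 e^{-f}dv\geq \int V'(\phi)^2 e^{-f}dv$. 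Summing the two yields $\tfrac{1}{2}\int V'(\phi)^2 e^{-f}dv + 2\int|\nabla\phi|^4 e^{-f}dv\leq 0$, which forces both non-negative integrands to vanish, so $\phi\equiv\phi_0$ and $V'(\phi_0)=0$.

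For the non-compact case, Corollary \ref{CCOR} applied with the $b,c$ above gives $|\nabla\phi|^2\leq -b/c$ on all of $\Sigma$. When $V_{\min}\geq 0$ this is $|\nabla\phi|\equiv 0$, whence $\phi=\phi_0$ is constant and (\ref{SESF4}) forces $V'(\phi_0)=0$. To handle the remaining sub-case I would introduce the auxiliary non-negative function $\psi:=|\nabla\phi|^2+V(\phi)-V_{\min}$: combining the Bochner inequality above with $\Delta_f V(\phi)=V''(\phi)|\nabla\phi|^2+\tfrac{1}{2}V'(\phi)^2$ and invoking the hypothesis gives the clean identity
\[
\Delta_f\psi \;\geq\; 2|\nabla\phi|^4 + \tfrac{1}{2}V'(\phi)^2 \;\geq\; 0,
\]
so $\psi$ is non-negative and $f$-subharmonic, with $\psi\equiv\mathrm{const}$ equivalent to the desired conclusion $\nabla\phi\equiv 0$ and $V'(\phi)\equiv 0$.

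The main obstacle is precisely the non-compact case with $V_{\min}<0$, where Corollary \ref{CCOR} yields only the uniform (but generally non-zero) bound $|\nabla\phi|^2\leq -V_{\min}/(2(n-1))$, and where the $f$-subharmonic $\psi$ need not be bounded above a priori. Closing the gap should rely on a Liouville- or Omori–Yau-type statement for $f$-subharmonic functions on a geodesically complete manifold with $Ric^1_f\geq V_{\min}/(n-1)\,g$ bounded below, combined with the a priori $C^0$ control on $|\nabla\phi|$ already extracted from Corollary \ref{CCOR}. This weighted-Liouville step, rather than the Bochner computation or the compact-case integration, is where I expect the technical subtleties of the Bakry–Émery comparison framework to be most delicate.
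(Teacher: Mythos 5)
Your Bochner computation and your overall plan (reduce to (\ref{FUNDEQ}) and invoke Corollary \ref{CCOR}) coincide with the paper's, and your compact-case argument is correct --- indeed more robust than the paper's one-line remark, since your integration by parts against $e^{-f}\,d{\rm v}$ does not need the pointwise sign of the zeroth-order coefficient. The genuine gap is the one you yourself flag: the non-compact case with $\inf V<0$, which is precisely the situation the theorem is meant to cover (sign-changing potentials such as the Sine--Gordon example in the same subsection), is left open; neither the auxiliary function $|\nabla\phi|^{2}+V(\phi)-V_{\min}$ (not known to be bounded, since $\phi$ could a priori be unbounded) nor an unproved weighted Liouville theorem closes it. The paper needs none of this machinery: its inequality (\ref{TOUT}) reads $\tfrac12\Delta_{f}|\nabla\phi|^{2}\geq\bigl(V''(\phi)+\tfrac{V(\phi)}{n-1}\bigr)|\nabla\phi|^{2}+|\nabla\phi|^{4}$, whose linear coefficient is exactly the quantity assumed non-negative in (\ref{CONDV}); discarding it gives $\Delta_{f}|\nabla\phi|^{2}\geq 2|\nabla\phi|^{4}$, i.e.\ (\ref{FUNDEQ}) with $b=0$, $c=2$, and Corollary \ref{CCOR} yields $|\nabla\phi|^{2}\leq -b/c=0$ at once, compact or not. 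The entire obstacle you describe is an artifact of the extra $V/(n-1)$ in your bracket relative to (\ref{TOUT}).

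That said, your constant-tracking exposes a real discrepancy worth noting. From (\ref{SESF4}) one has $\Delta_{f}\phi=V'(\phi)/2$, hence $\langle\nabla\phi,\nabla(\Delta_{f}\phi)\rangle=\tfrac12 V''(\phi)|\nabla\phi|^{2}$, so your coefficient $V''+2V/(n-1)$ for $\Delta_{f}|\nabla\phi|^{2}$ is the correct one, whereas (\ref{TOUT}) amounts to $2V''+2V/(n-1)$ and appears to have lost the factor $1/2$ on the $V''$ term. Under the stated hypothesis (\ref{CONDV}) your correct bracket is only bounded below by $V/(n-1)$, which is exactly why you get stuck when $V$ takes negative values. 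The repair, however, is not a Liouville theorem but an alignment of the hypothesis with the Bochner identity: under the condition $\tfrac12 V''(x)+\tfrac{V(x)}{n-1}\geq 0$ your own inequality has non-negative linear term and the one-line argument via Corollary \ref{CCOR} with $b=0$ goes through verbatim (for the listed examples with $V\geq 0$ either condition holds and nothing changes). In short: the intended mechanism is to make $b=0$ in Lemma \ref{MAIN}, not to run Corollary \ref{CCOR} with $b<0$ and then try to upgrade the resulting nonzero bound to zero.
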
 

\begin{proof} Make $f=-\ln N$. Then, using (\ref{BOC}) with $\chi=\phi$ we obtain
\be\label{TOUT}
\frac{1}{2}\Delta_{f}|\nabla \phi|^{2}\geq \big(V''(\phi)+\frac{V(\phi)}{n-1}\big)|\nabla \phi|^{2}+|\nabla \phi|^{4}
\ee
If (\ref{CONDV}) holds an $\Sigma$ is compact then $\nabla \phi=0$ by integrating (\ref{TOUT}) over $\Sigma$. On the other hand if $\Sigma$ is non-compact and (\ref{CONDV}) holds then $\nabla \phi=0$ from Corollary \ref{CCOR}. 

Finally if $\phi=\phi_{0}$ then equation (\ref{SESF4}) shows that $\phi_{0}$ is a critical point of $V(\phi)$.
\end{proof}
To illustrate the relevance of this Theorem let us consider a set of simple and (more or less) natural potentials and let us enumerate, without entering into further discussion, the strong conclusions that can be deduced in each case. 
\begin{enumerate}
\item $V(\phi)=\lambda \phi^{2n}$, $\lambda>0$, $n=1,2,3,\ldots$. In this case (\ref{CONDV}) is verified and therefore any geodesically complete solution must have $\phi=0$. 
\item $V(\phi)=\lambda\cosh \phi$, $\lambda>0$. In this case (\ref{CONDV}) is verified and therefore any geodesically complete solution must have $\phi=0$.
\item $V(\phi)=\lambda e^{\phi}$, $\lambda>0$. In this case (\ref{CONDV}) is verified but there cannot be geodesically complete solutions at all because $V$ has no critical points.
\item $V(\phi)=\lambda \sin \sqrt{(n-1)}\phi$ (a type of Sine-Gordon potential). In this case the l.h.s of (\ref{CONDV}) is identically zero and thus any geodesically complete solution must have $\phi = (-\pi/2+2j\pi)/\sqrt{n-1}, j\in \mathbb{Z}$ (the other critical points make $V$ strictly positive). This example is interesting because it shows that strong conclusions can be obtained even when $V$ is not a non-negative potential.
\item $V(\phi)=\lambda (\phi^{2}-\phi_{0}^{2})^{2}$, $\lambda>0$, (a type of Higgs potential). In this case one can show that if $\phi^{2}_{0}>6(n-1)$ then any geodesically complete solution must have $|\phi|=|\phi_{0}|$. To see this observe that, in this case, (\ref{CONDV}) is equivalent to
\be
12(\phi^{2}-\phi_{0}^{2})+8\phi^{2}_{0}+\frac{(\phi^{2}-\phi_{0}^{2})^{2}}{n-1}\geq 0
\ee
Making $z=\phi^{2}-\phi_{0}^{2}$, the previous equation is equivalent to $12z+8\phi_{0}^{2}+z^{2}/(n-1)\geq 0$ for all $z\geq -\phi_{0}^{2}$. But if $\phi_{0}^{2}\geq 6(n-1)$ then the polynomial $12z+8\phi_{0}^{2}+z^{2}/(n-1)$ is non-negative. 
\end{enumerate}

\bibliographystyle{plain}
\bibliography{Master}

\end{document}